\numberwithin{equation}{section}
\numberwithin{figure}{section}
\theoremstyle{plain}
\newtheorem{thm}{\protect\theoremname}
\theoremstyle{plain}
\newtheorem{lem}[thm]{\protect\lemmaname}
\theoremstyle{remark}
\newtheorem{rem}[thm]{\protect\remarkname}
\providecommand{\lemmaname}{Lemma}
\providecommand{\remarkname}{Remark}
\providecommand{\theoremname}{Theorem}
\begin{document}
\title{On representations of the Helmholtz Green's function}
\author{Gregory Beylkin}
\address{Department of Applied Mathematics, University of Colorado at Boulder,
UCB 526, Boulder, CO 80309-0526}
\begin{abstract}
We consider the free space Helmholtz Green's function and split it
into the sum of oscillatory and non-oscillatory (singular) components.
The goal is to separate the impact of the singularity of the real
part at the origin from the oscillatory behavior controlled by the
wave number $k$. The oscillatory component can be chosen to have
any finite number of continuous derivatives at the origin and can
be applied to a function in the Fourier space in $\mathcal{O}\left(k^{d}\log k\right)$
operations. The non-oscillatory component has a multiresolution representation
via a linear combination of Gaussians and is applied efficiently in
space.

Since the Helmholtz Green's function can be viewed as a point source,
this partitioning can be interpreted as a splitting into propagating
and evanescent components. We show that the non-oscillatory component
is significant only in the vicinity of the source at distances $\mathcal{O}\left(c_{1}k^{-1}+c_{2}k^{-1}\log_{10}k\right)$,
for some constants $c_{1}$, $c_{2}$, whereas the propagating component
can be observed at large distances.
\end{abstract}

\maketitle

\section{Introduction}

In this paper we consider the free space Helmholtz Green's function
given by
\begin{equation}
G\left(r\right)=\left\{ \begin{array}{rc}
{\displaystyle \frac{1}{4\pi}\frac{e^{ikr}}{r}=\frac{1}{4\pi}\frac{\cos\left(kr\right)}{r}+\frac{i}{4\pi}\frac{\sin\left(kr\right)}{r}} & \mbox{in dimension }d=3,\\
\\
{\displaystyle \frac{i}{4}H_{0}^{(1)}\left(kr\right)=-\frac{1}{4}Y_{0}\left(kr\right)+\frac{i}{4}J_{0}\left(kr\right)} & \mbox{in dimension }d=2,
\end{array}\right.\label{eq:  FreeSpaceGF}
\end{equation}
where $H_{0}^{(1)}$ is the Hankel function of the first kind, $J_{0}$
and $Y_{0}$ are the Bessel functions of the first and second kind,
$r=\left\Vert \mathbf{x}\right\Vert =\left(\sum_{j=1}^{d}x_{j}^{2}\right)^{1/2}$
denotes the Euclidean norm of the vector $\mathbf{x}$ and $k>0$.
We separate $G$ into the sum of oscillatory and non-oscillatory (singular)
components. The oscillatory component can be chosen to have any finite
number of continuous derivatives at $r=0$. As far as we know, previous
approaches to split $G$ in this manner did not allow to choose the
number of smooth derivatives at $r=0$. As in \cite{BE-KU-MO:2009,BE-KU-MO:2008},
the oscillatory component can be applied to a function in the Fourier
space in $\mathcal{O}\left(k^{d}\log k\right)$ operations. The non-oscillatory
component has a multiresolution representation via a linear combination
of Gaussians and is applied efficiently in space.

Our approach is a modification of that in \cite{BE-KU-MO:2009,BE-KU-MO:2008}
leading to explicit formulas. The goal is to separate the impact of
the singularity of the real part of (\ref{eq:  FreeSpaceGF}) at the
origin from the oscillatory behavior controlled by the wave number
$k$. Specifically, we want the number of derivatives at the origin
of the oscillatory component to be user selected and the non-oscillatory
component to have a multiresolution representation via a linear combination
of Gaussians. For non-oscillatory kernels integral representations
involving Gaussians lead to efficient multiresolution approximations
(see e.g. \cite{H-F-Y-G-B:2004,BEY-MOH:2005,B-C-F-H:2007,BE-MO-PE:2008,BEY-MON:2010,B-F-H-K-M:2012,H-B-B-C-F-F-G-etc:2016,A-H-S-T-B:2020}),
i.e. when approaching a singularity the domain of integration automatically
shrinks leading to fast algorithms for application of such kernels.
We want a similar representation of the non-oscillatory component
of the Helmholtz Green's function (\ref{eq:  FreeSpaceGF}).

Since $G$ in (\ref{eq:  FreeSpaceGF}) can be interpreted as a point
source, a physical interpretation of splitting it into oscillatory
and non-oscillatory components may be viewed as a splitting into propagating
and evanescent components. Indeed, we show that the non-oscillatory
component is significant only in the vicinity of the source at distances
$\mathcal{O}\left(c_{1}k^{-1}+c_{2}k^{-1}\log_{10}k\right)$, for
constants $c_{1}$, $c_{2}$, whereas the propagating component can
be observed at large distances.

\section{Preliminaries}

\subsection{Green's functions}

The free space Green's function (\ref{eq:  FreeSpaceGF}) of the Helmholtz
equation satisfies
\begin{equation}
\Delta G\left(\mathbf{x}\right)+k^{2}G\left(\mathbf{x}\right)=-\delta\left(\mathbf{x}\right),\label{eq:  Fundamental Solution}
\end{equation}
and, on taking the Fourier transform of (\ref{eq:  Fundamental Solution}),
we obtain 
\begin{equation}
\widehat{G}\left(\left\Vert \mathbf{p}\right\Vert \right)=\frac{1}{\left\Vert \mathbf{p}\right\Vert ^{2}-k^{2}},\label{eq:  G hat}
\end{equation}
where $\mathbf{p}\in\mathbb{R}^{d}$, $d=2,3$, We use the Fourier
transform defined as 
\begin{equation}
\widehat{f}\left(\mathbf{p}\right)=\int_{\mathbb{R}^{d}}f\left(\mathbf{x}\right)e^{-i\mathbf{x}\cdot\mathbf{p}}d\mathbf{x}\label{eq:  Forward FT}
\end{equation}
and its inverse as
\begin{equation}
f\left(\mathbf{x}\right)=\frac{1}{\left(2\pi\right)^{d}}\int_{\mathbb{R}^{d}}\widehat{f}\left(\mathbf{p}\right)e^{i\mathbf{x}\cdot\mathbf{p}}d\mathbf{p}.\label{eq:  Inverse FT}
\end{equation}
The inverse Fourier transform of $\widehat{G}$ is a singular integral
and we use regularization (see \cite{BE-KU-MO:2009})
\begin{equation}
G(\mathbf{x})=\lim_{\lambda\rightarrow0^{+}}\frac{1}{(2\pi)^{d}}\int_{\mathbb{R}^{d}}\frac{e^{i\mathbf{x}\cdot\mathbf{p}}}{\left\Vert \mathbf{p}\right\Vert ^{2}-\left(k+i\lambda\right)^{2}}d\mathbf{p},\label{eq:  G Limiting}
\end{equation}
which yields the outgoing Green's functions (\ref{eq:  FreeSpaceGF})
satisfying the Sommerfeld radiation condition
\begin{equation}
\lim_{r\rightarrow\infty}r{}^{\frac{d-1}{2}}\left(\frac{\partial G}{\partial r}-ikG\right)=0.\label{eq:  Sommerfeld}
\end{equation}
Following \cite{BE-KU-MO:2009}, we define 
\[
\hat{G}_{\lambda}\left(p\right)=\frac{1}{p^{2}-\left(k+i\lambda\right)^{2}}=\frac{1}{2p}\left(\frac{1}{p-k-i\lambda}+\frac{1}{p+k+i\lambda}\right)
\]
and separate its real and imaginary parts

\noindent 
\begin{equation}
\mathcal{R}e\left(\hat{G}_{\lambda}\left(p\right)\right)=\frac{1}{2p}\left(\frac{p-k}{\left(p-k\right){}^{2}+\lambda^{2}}+\frac{p+k}{\left(p+k\right){}^{2}+\lambda^{2}}\right)\label{eq:  Real G hat eps}
\end{equation}
and
\begin{equation}
\mathcal{I}m\left(\hat{G}_{\lambda}\left(p\right)\right)=\frac{1}{2p}\left(\frac{\lambda}{\left(p-k\right){}^{2}+\lambda^{2}}-\frac{\lambda}{\left(p+k\right){}^{2}+\lambda^{2}}\right).\label{eq:  Imag G hat eps}
\end{equation}

\noindent We observe that the limit 
\begin{equation}
\lim_{\lambda\rightarrow0^{+}}\mathcal{I}m\left(\hat{G}_{\lambda}\left(p\right)\right)=\frac{\pi}{2p}\left(\delta\left(p-k\right)-\delta\left(p+k\right)\right)\label{eq:  Imag G hat delta function}
\end{equation}
is a generalized function (see e.g. \cite[Chapter III, section 1.3]{GEL-SHI:1964})
corresponding to integration over the sphere in the Fourier domain.
Using e.g. \cite[Section 4.1]{GRAFAK:2004}, we have 
\[
\lim_{\lambda\rightarrow0^{+}}\int_{0}^{\infty}\frac{1}{2p}\left(\frac{p-k}{\left(p-k\right)^{2}+\lambda^{2}}+\frac{p+k}{\left(p+k\right){}^{2}+\lambda^{2}}\right)d\rho=\mbox{p.v.}\int_{0}^{\infty}\frac{1}{p^{2}-k^{2}}d\rho,
\]

\noindent so that
\begin{equation}
\mathcal{R}e\left(G\left(\mathbf{x}\right)\right)=\frac{1}{(2\pi)^{d}}\mbox{p.v.}\int_{\mathbb{R}^{d}}\frac{e^{i\mathbf{x}\cdot\mathbf{p}}}{\left\Vert \mathbf{p}\right\Vert ^{2}-k^{2}}d\mathbf{p},\label{eq:  reGpv}
\end{equation}
where the principal value is considered about $\left\Vert \mathbf{p}\right\Vert =k$.

\section{Splitting of the Green's function in the Fourier domain}

We start with
\begin{lem}
\label{lem:key lemma}For $n\ge1$ and $p\ne k$ we have
\begin{equation}
\frac{1}{p^{2}-k^{2}}-\widehat{g}_{n}\left(p,k\right)=\frac{1}{p^{2}-k^{2}}\frac{\left(2k^{2}\right)^{n}}{\left(p^{2}+k^{2}\right)^{n}}=\mathcal{O}\left(\frac{1}{p^{2n+2}}\right),\label{eq:main approx}
\end{equation}
where
\begin{equation}
\widehat{g}_{n}\left(p,k\right)=\sum_{j=0}^{n-1}\frac{\left(2k^{2}\right)^{j}}{\left(p^{2}+k^{2}\right)^{j+1}}.\label{eq:g_n}
\end{equation}
\end{lem}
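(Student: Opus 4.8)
The plan is to recognize $\widehat{g}_{n}(p,k)$ as a finite geometric series and sum it in closed form, after which both assertions of the lemma follow by elementary algebra. Factoring $1/(p^{2}+k^{2})$ out of the definition (\ref{eq:g_n}), I would write
\[
\widehat{g}_{n}(p,k)=\frac{1}{p^{2}+k^{2}}\sum_{j=0}^{n-1}q^{j},\qquad q=\frac{2k^{2}}{p^{2}+k^{2}},
\]
so that the inner sum is geometric with ratio $q$.

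Next I would apply the finite geometric sum formula $\sum_{j=0}^{n-1}q^{j}=(1-q^{n})/(1-q)$, valid precisely because $q\ne1$. This is exactly where the hypothesis $p\ne k$ enters, since $q=1$ if and only if $p^{2}=k^{2}$. The key simplification is that
\[
1-q=1-\frac{2k^{2}}{p^{2}+k^{2}}=\frac{p^{2}-k^{2}}{p^{2}+k^{2}},
\]
so the prefactor $1/(p^{2}+k^{2})$ cancels against the denominator of $1/(1-q)$ and one is left with
\[
\widehat{g}_{n}(p,k)=\frac{1-q^{n}}{p^{2}-k^{2}}=\frac{1}{p^{2}-k^{2}}-\frac{1}{p^{2}-k^{2}}\frac{(2k^{2})^{n}}{(p^{2}+k^{2})^{n}}.
\]
Rearranging this identity gives the first equality in (\ref{eq:main approx}) immediately.

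For the asymptotic claim I would simply read off the leading behavior of the closed form as $p\to\infty$: since $1/(p^{2}-k^{2})=\mathcal{O}(p^{-2})$ and $(2k^{2})^{n}/(p^{2}+k^{2})^{n}=\mathcal{O}(p^{-2n})$, their product is $\mathcal{O}(p^{-2n-2})$, as asserted.

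I do not expect a genuine obstacle here: the entire content is the recognition of the geometric-series structure, and the only point demanding care is keeping track of the hypothesis $p\ne k$, which is needed both to divide by $p^{2}-k^{2}$ and to legitimize the sum formula through $q\ne1$.
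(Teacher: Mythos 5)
Your proof is correct and takes essentially the same route as the paper: both sum the geometric series $\sum_{j=0}^{n-1}q^{j}$ with ratio $q=2k^{2}/(p^{2}+k^{2})$ and use $1-q=(p^{2}-k^{2})/(p^{2}+k^{2})$ to obtain the closed form, from which the identity and the $\mathcal{O}(p^{-2n-2})$ decay follow. Your explicit remarks on where $p\ne k$ is used and on the asymptotics are fine elaborations of points the paper leaves implicit.
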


\begin{proof}
We have
\begin{eqnarray*}
\widehat{g}_{n}\left(p,k\right) & = & \frac{1}{p^{2}+k^{2}}\sum_{j=0}^{n-1}\frac{\left(2k^{2}\right)^{j}}{\left(p^{2}+k^{2}\right)^{j}}\\
 & = & \frac{1}{p^{2}+k^{2}}\left(1-\frac{\left(2k^{2}\right)^{n}}{\left(p^{2}+k^{2}\right)^{n}}\right)\left(1-\frac{2k^{2}}{p^{2}+k^{2}}\right)^{-1}\\
 & = & \frac{1}{p^{2}-k^{2}}\left(1-\frac{\left(2k^{2}\right)^{n}}{\left(p^{2}+k^{2}\right)^{n}}\right)\\
 & = & \frac{1}{p^{2}-k^{2}}-\frac{1}{p^{2}-k^{2}}\frac{\left(2k^{2}\right)^{n}}{\left(p^{2}+k^{2}\right)^{n}}
\end{eqnarray*}
and arrive at (\ref{eq:main approx}) as an algebraic identity for
$p\ne k$.
\end{proof}
Using Lemma~\ref{lem:key lemma}, we obtain the splitting of (\ref{eq:  G hat})
in the Fourier domain as 
\begin{equation}
\widehat{G}\left(\left\Vert \mathbf{p}\right\Vert \right)=\widehat{g}_{n}\left(\left\Vert \mathbf{p}\right\Vert ,k\right)+\widehat{g}_{n,oscill}\left(\left\Vert \mathbf{p}\right\Vert ,k\right),\label{eq:splitting}
\end{equation}
where 
\begin{equation}
\widehat{g}_{n,oscill}\left(\left\Vert \mathbf{p}\right\Vert ,k\right)=\frac{1}{\left\Vert \mathbf{p}\right\Vert ^{2}-k^{2}}\frac{\left(2k^{2}\right)^{n}}{\left(\left\Vert \mathbf{p}\right\Vert ^{2}+k^{2}\right)^{n}}\label{eq:g hat oscill}
\end{equation}
and 
\begin{equation}
\widehat{g}_{n}\left(\left\Vert \mathbf{p}\right\Vert ,k\right)=\sum_{j=0}^{n-1}\frac{\left(2k^{2}\right)^{j}}{\left(\left\Vert \mathbf{p}\right\Vert ^{2}+k^{2}\right)^{j+1}}.\label{eq: non-oscillatory}
\end{equation}
The rate of decay of $\widehat{g}_{n,oscill}$ in the Fourier domain
for $\left\Vert \mathbf{p}\right\Vert >k$ is $\mathcal{O}\left(\left\Vert \mathbf{p}\right\Vert ^{-2n-2}\right)$
so that the volume of its significant support is proportional to $k^{d}$.
Following \cite{BE-KU-MO:2009,BE-KU-MO:2008}, we have
\begin{eqnarray*}
\widehat{G}\left(\left\Vert \mathbf{p}\right\Vert \right)=\frac{1}{2\left\Vert \mathbf{p}\right\Vert }\left(\frac{1}{\left\Vert \mathbf{p}\right\Vert -k}+\frac{1}{\left\Vert \mathbf{p}\right\Vert +k}\right) & =\\
\frac{1}{2\left\Vert \mathbf{p}\right\Vert }\int_{0}^{\infty}\left[\left(\left\Vert \mathbf{p}\right\Vert -k\right)e^{-s\left(\left\Vert \mathbf{p}\right\Vert -k\right)^{2}}+\left(\left\Vert \mathbf{p}\right\Vert +k\right)e^{-s\left(\left\Vert \mathbf{p}\right\Vert +k\right)^{2}}\right]ds
\end{eqnarray*}
yielding 
\[
\widehat{g}_{n,oscill}\left(\left\Vert \mathbf{p}\right\Vert ,k\right)=\frac{2^{n-1}k^{2n}}{\left(\left\Vert \mathbf{p}\right\Vert ^{2}+k^{2}\right)^{n}}\frac{1}{\left\Vert \mathbf{p}\right\Vert }\int_{0}^{\infty}\left[\left(\left\Vert \mathbf{p}\right\Vert -k\right)e^{-s\left(\left\Vert \mathbf{p}\right\Vert -k\right)^{2}}+\left(\left\Vert \mathbf{p}\right\Vert +k\right)e^{-s\left(\left\Vert \mathbf{p}\right\Vert +k\right)^{2}}\right]ds,
\]
or
\begin{equation}
\widehat{g}_{n,oscill}\left(\left\Vert \mathbf{p}\right\Vert ,k\right)=\frac{2^{n-1}k^{2n}}{\left(\left\Vert \mathbf{p}\right\Vert ^{2}+k^{2}\right)^{n}}\frac{1}{\left\Vert \mathbf{p}\right\Vert }\int_{-\infty}^{\infty}\left[\left(\left\Vert \mathbf{p}\right\Vert -k\right)e^{-e^{t}\left(\left\Vert \mathbf{p}\right\Vert -k\right)^{2}}+\left(\left\Vert \mathbf{p}\right\Vert +k\right)e^{-e^{t}\left(\left\Vert \mathbf{p}\right\Vert +k\right)^{2}}\right]e^{t}dt,\label{eq:g_=00007Bn,oscill=00007D}
\end{equation}
where $\mathbf{p}\in\mathbb{R}^{d}$, $d=2,3$. The integral in (\ref{eq:g_=00007Bn,oscill=00007D})
is a multiresolution representation of $\widehat{g}_{n,oscill}$ centered
at the singularity $\left\Vert \mathbf{p}\right\Vert =k$. This representation
allows us to implement the principal value limit (see e.g. (\ref{eq:  reGpv}))
by simply ignoring fine scales since, at some point, their contribution
is negligible. Effectively it amounts to replacing the upper limit
in the integral in (\ref{eq:g_=00007Bn,oscill=00007D}) by a carefully
chosen finite value. Discretizing (\ref{eq:g_=00007Bn,oscill=00007D})
leads to an approximation of $\widehat{g}_{n,oscill}$ via a linear
combination of smooth (rotationally invariant) kernels similar to
that obtained in \cite{BE-KU-MO:2009}. We refer to \cite{BE-KU-MO:2009}
for the details of applying $\widehat{g}_{n,oscill}$ to a function
via an algorithm of complexity $\mathcal{O}\left(k^{d}\log k\right)$.

\section{Spatial representations in $\mathbb{R}^{3}$}

While the oscillatory component is applied efficiently in the Fourier
domain due to its rapid decay, the non-oscillatory component decays
slowly in the Fourier domain but its application is efficient in space.
We start by computing spatial representations in $\mathbb{R}^{3}$
since these are different in dimensions $d=3$ and $d=2$. 
\begin{lem}
The inverse Fourier transform of the non-oscillatory component (\ref{eq: non-oscillatory})
is given by
\begin{equation}
g_{n}\left(r,k\right)=\frac{1}{4\pi}\frac{e^{-kr}}{r}\left(1+\sum_{j=1}^{n-1}\frac{1}{2^{j-1}j!}\sum_{m=0}^{j-1}\frac{\left(2j-m-2\right)!2^{m}}{m!\left(j-m-1\right)!}\left(kr\right)^{m+1}\right)\label{eq:general formula for g_n(r)}
\end{equation}
Computing $g_{n}$ for $n=1,2,\dots$ we obtain
\begin{eqnarray}
g_{1}\left(r,k\right) & = & \frac{1}{4\pi}\frac{e^{-kr}}{r}\nonumber \\
g_{2}\left(r,k\right) & = & \frac{1}{4\pi}\frac{e^{-kr}}{r}\left(1+kr\right)\nonumber \\
g_{3}\left(r,k\right) & = & \frac{1}{4\pi}\frac{e^{-kr}}{r}\left(1+\frac{3}{2}kr+\frac{1}{2}\left(kr\right)^{2}\right)\nonumber \\
g_{4}\left(r,k\right) & = & \frac{1}{4\pi}\frac{e^{-kr}}{r}\left(1+2kr+\left(kr\right)^{2}+\frac{1}{6}\left(kr\right)^{3}\right)\label{eq:explicit reps for g_n}\\
g_{5}\left(r,k\right) & = & \frac{1}{4\pi}\frac{e^{-kr}}{r}\left(1+\frac{21}{8}kr+\frac{13}{8}\left(kr\right)^{2}+\frac{5}{12}\left(kr\right)^{3}+\frac{1}{24}\left(kr\right)^{4}\right)\nonumber \\
 & \dots & ,\nonumber 
\end{eqnarray}
where $r=\left\Vert \mathbf{x}\right\Vert $.
\end{lem}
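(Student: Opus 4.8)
The plan is to reduce everything to a single building block, the inverse Fourier transform in $\mathbb{R}^{3}$ of $\left(p^{2}+k^{2}\right)^{-\left(j+1\right)}$ with $p=\left\Vert \mathbf{p}\right\Vert $, and then to sum these with the weights $\left(2k^{2}\right)^{j}$ appearing in (\ref{eq: non-oscillatory}). First I would record the base case $j=0$: in $\mathbb{R}^{3}$ the inverse transform of a rotationally invariant $\widehat{h}\left(p\right)$ is given by the radial sine transform $h\left(r\right)=\frac{1}{2\pi^{2}r}\int_{0}^{\infty}\widehat{h}\left(p\right)p\sin\left(pr\right)dp$, and evaluating $\int_{0}^{\infty}p\sin\left(pr\right)\left(p^{2}+k^{2}\right)^{-1}dp=\frac{\pi}{2}e^{-kr}$ by residues yields the Yukawa kernel $\frac{1}{4\pi}e^{-kr}/r$. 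This already produces $g_{1}$ and fixes the normalization.

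Next I would obtain the higher powers by differentiating in the parameter $a=k^{2}$. Since $\partial_{a}^{j}\left(p^{2}+a\right)^{-1}=\left(-1\right)^{j}j!\left(p^{2}+a\right)^{-\left(j+1\right)}$ and the inverse transform commutes with $\partial_{a}$, the inverse transform of $\left(p^{2}+k^{2}\right)^{-\left(j+1\right)}$ equals $\frac{\left(-1\right)^{j}}{j!}\partial_{a}^{j}\left(\frac{1}{4\pi}e^{-\sqrt{a}\,r}/r\right)$ evaluated at $a=k^{2}$. Writing the operator as $D=\frac{1}{2k}\frac{d}{dk}$ (which is $\partial_{a}$ expressed through $k=\sqrt{a}$), I would establish the finite expansion
\begin{equation}
D^{j}e^{-kr}=e^{-kr}\sum_{m=1}^{j}\alpha_{j,m}\,r^{m}k^{m-2j},\qquad j\ge1,\nonumber
\end{equation}
so that after multiplying by $\left(2k^{2}\right)^{j}$ the surviving powers of $k$ combine to produce exactly the monomials $\left(kr\right)^{m}$. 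Summing over $j=0,\dots,n-1$ then reproduces a polynomial in $kr$ multiplying $\frac{1}{4\pi}e^{-kr}/r$, and comparing with (\ref{eq:general formula for g_n(r)}) after the index shift $m\mapsto m+1$ reduces the entire statement to the single coefficient identity
\begin{equation}
\alpha_{j,m}=\left(-1\right)^{j}\,\frac{\left(2j-m-1\right)!\,2^{m-1}}{2^{2j-1}\left(m-1\right)!\,\left(j-m\right)!}.\nonumber
\end{equation}

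I expect this coefficient identity to be the main obstacle, since it is where all of the bookkeeping from the repeated $a$-derivatives concentrates. I would handle it by induction on $j$: applying $D=\frac{1}{2k}\frac{d}{dk}$ to the ansatz yields the two-term recursion $\alpha_{j+1,m}=\frac{1}{2}\left(m-2j\right)\alpha_{j,m}-\frac{1}{2}\alpha_{j,m-1}$, and substituting the proposed closed form turns this into an elementary factorial identity that is verified directly. As consistency checks I would confirm $\alpha_{1,1}=-\frac{1}{2}$ and $\alpha_{2,1}=\alpha_{2,2}=\frac{1}{4}$, and that the resulting polynomials match the tabulated cases $g_{1},\dots,g_{5}$ in (\ref{eq:explicit reps for g_n}). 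The only analytic points requiring care are the justification of differentiation under the integral sign in the parameter $a$ (uniform convergence for $r>0$ and $k$ in a compact set bounded away from $0$) and the convergence of the radial transform for each fixed power; both are immediate because the relevant integrands decay like $p^{-2\left(j+1\right)}$.
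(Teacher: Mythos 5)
Your proposal is correct, and it reaches the stated formula by a genuinely different route than the paper. The paper's proof is very short: it reduces the inverse Fourier transform of the rotationally invariant sum (\ref{eq: non-oscillatory}) to the radial sine transform (\ref{eq: explicit g_n(r) in 3D}) and then simply cites Gradshteyn--Ryzhik, Formula 3.737.2, for the integrals $\int_{0}^{\infty}p\sin\left(pr\right)\left(p^{2}+k^{2}\right)^{-j-1}dp$. You share the same first step (the radial reduction, which you use only for the base case $j=0$, evaluated by residues to get the Yukawa kernel), but you replace the table lookup by a self-contained derivation: parametric differentiation $\partial_{a}^{j}\left(p^{2}+a\right)^{-1}=\left(-1\right)^{j}j!\left(p^{2}+a\right)^{-j-1}$ at $a=k^{2}$, the operator $D=\frac{1}{2k}\frac{d}{dk}$, the finite expansion $D^{j}e^{-kr}=e^{-kr}\sum_{m=1}^{j}\alpha_{j,m}r^{m}k^{m-2j}$, and an induction on $j$ via the recursion $\alpha_{j+1,m}=\frac{1}{2}\left(m-2j\right)\alpha_{j,m}-\frac{1}{2}\alpha_{j,m-1}$. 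I checked the details: the recursion is what $D$ produces on your ansatz, your closed form for $\alpha_{j,m}$ satisfies it (the key factorial step is $2\left(j-m+1\right)+\left(m-1\right)=2j-m+1$), your spot checks $\alpha_{1,1}=-\frac{1}{2}$, $\alpha_{2,1}=\alpha_{2,2}=\frac{1}{4}$ are right, and after multiplying by $\left(2k^{2}\right)^{j}/j!$ with the sign $\left(-1\right)^{j}$ and shifting the index, your coefficient $\frac{\left(2j-m-1\right)!\,2^{m-j}}{j!\left(m-1\right)!\left(j-m\right)!}$ of $\left(kr\right)^{m}$ agrees exactly with the paper's $\frac{1}{2^{j-1}j!}\cdot\frac{\left(2j-m-2\right)!\,2^{m}}{m!\left(j-m-1\right)!}$ after $m\mapsto m+1$. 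What each approach buys: the paper's proof is essentially one citation, short but opaque and dependent on the correctness of the table entry; yours is longer and requires the factorial bookkeeping, but it is elementary, verifiable line by line, and in effect proves the table entry itself. Your handling of differentiation under the integral sign is also adequate, with one small caveat worth stating precisely: at the first differentiation the undifferentiated integrand decays only like $p^{-1}$ (conditionally convergent), so the justification is that the differentiated integral converges absolutely and uniformly for the parameter in compact subsets of $\left(0,\infty\right)$, which then legitimizes the interchange; for $j\ge1$ everything is absolutely convergent as you say.
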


\begin{proof}
Computing the inverse Fourier transform of rotationally invariant
function (\ref{eq: non-oscillatory}) in dimension $d=3$, we have 

\begin{equation}
g_{n}\left(r,k\right)=\frac{1}{\left(2\pi\right)^{3}}\int_{\mathbb{R}^{3}}\widehat{g}_{n}\left(\left\Vert \mathbf{p}\right\Vert ,k\right)e^{i\mathbf{p}\cdot\mathbf{r}}d\mathbf{p}=\frac{1}{2\pi^{2}r}\sum_{j=0}^{n-1}\left(2k^{2}\right)^{j}\int_{0}^{\infty}\frac{p\sin\left(pr\right)}{\left(p^{2}+k^{2}\right)^{j+1}}dp,\label{eq: explicit g_n(r) in 3D}
\end{equation}
where the last integral is available in \cite[Formula 3.737.2]{GRA-RYZ:2007}
leading to (\ref{eq:general formula for g_n(r)}).
\end{proof}
\begin{rem}
\label{rem:We-want-to}We want to estimate the significant support
of $g_{n}$, $g_{n}\left(r,k\right)\ge\epsilon$ as a function of
$k$. Ignoring constant factors, we observe from (\ref{eq:general formula for g_n(r)})
that the term in (\ref{eq:general formula for g_n(r)}) $e^{-kr}r^{n-2}k^{n-1}$
decays slower than other terms. To estimate its significant support,
we consider $e^{-kr}r^{n-2}k^{n-1}\ge\epsilon$, so that
\begin{equation}
r\le k^{-1}\log_{10}\left(\epsilon^{-1}\right)+k^{-1}\left(n-2\right)\log_{10}r+k^{-1}\left(n-1\right)\log_{10}k.\label{eq:inequality}
\end{equation}
Although $r$ appears on both sides of this inequality, the factor
$\log_{10}r$ is negative for $r\le1$ and, since we are interested
in distances $\mathcal{O}\left(k^{-1}\log_{10}k\right)$, the second
term in (\ref{eq:inequality}) can be dropped. For a given $\epsilon$,
as $k$ becomes large, $g_{n}\left(r,k\right)$ is greater than $\epsilon$
within a ball of radius of $\mathcal{O}\left(c_{1}k^{-1}+c_{2}k^{-1}\log_{10}k\right)$.
We illustrate this relation in Figure~\ref{fig:Plots-of-evanescent}
observing that, for a fixed $\epsilon$, $\log r$ is essentially
proportional to $-\log k$. 
\begin{figure}
\begin{centering}
\includegraphics[scale=0.45]{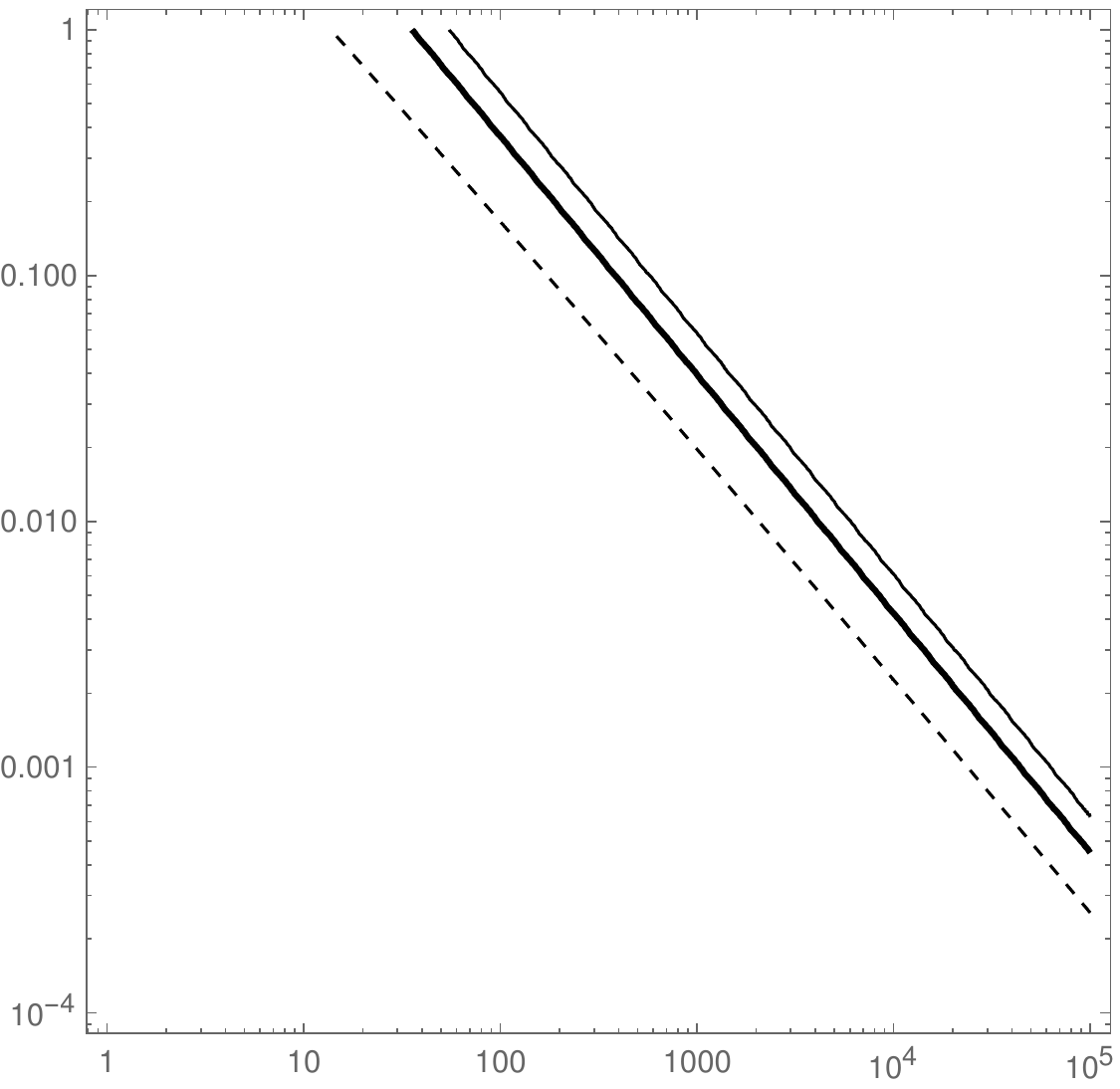}
\par\end{centering}
\caption{\label{fig:Plots-of-evanescent}Log-log contour plot of the non-oscillatory
component $g_{8}\left(r\left(k\right),k\right)$ in (\ref{eq:general formula for g_n(r)})
for different values of $\epsilon$ and $1\le k\le10^{5}$, where thin line corresponds to $\epsilon=10^{-16}$, thick line
to $\epsilon=10^{-9}$ and dashed line to $10^{-2}$.}
\end{figure}
\end{rem}

Next we consider the difference between the real part of $G$ in (\ref{eq:  FreeSpaceGF})
and $g_{n}$,
\begin{equation}
q_{n}\left(r,k\right)=\mathcal{R}e\left(g_{n,oscill}\right)\left(r,k\right)=\frac{1}{4\pi}\frac{\cos kr}{r}-g_{n}\left(r,k\right),\label{eq:difference q_n}
\end{equation}
and examine its behavior at $r=0$. 
\begin{lem}
\label{lem:The-Taylor-expansion} The difference 
\begin{equation}
\mathcal{R}e\left(g_{n,oscill}\right)\left(\mathbf{r},k\right)=\frac{1}{4\pi}\frac{\cos k\left\Vert \mathbf{r}\right\Vert }{\left\Vert \mathbf{r}\right\Vert }-g_{n}\left(\left\Vert \mathbf{r}\right\Vert ,k\right),\label{eq:difference cos/r and g_n-1}
\end{equation}
has continuous partial derivatives at zero up to order $2n-2$. The
Taylor expansion of $q_{n}$ at $r=\left\Vert \mathbf{r}\right\Vert =0$
yields
\begin{eqnarray}
q_{1}\left(r,k\right) & = & \frac{k}{4\pi}\left(1-kr+\mathcal{O}\left(\left(kr\right)^{2}\right)\right)\nonumber \\
q_{2}\left(r,k\right) & = & \frac{k}{4\pi}\left(-\frac{\left(kr\right)^{2}}{3}+\frac{\left(kr\right)^{3}}{6}+\mathcal{O}\left(\left(kr\right)^{4}\right)\right)\nonumber \\
q_{3}\left(r,k\right) & = & \frac{k}{4\pi}\left(-\frac{1}{2}-\frac{\left(kr\right)^{2}}{12}+\frac{7\left(kr\right)^{4}}{240}-\frac{\left(kr\right)^{5}}{90}+\mathcal{O}\left(\left(kr\right)^{6}\right)\right)\nonumber \\
q_{4}\left(r,k\right) & = & \frac{k}{4\pi}\left(-1+\frac{\left(kr\right)^{4}}{120}-\frac{\left(kr\right)^{6}}{840}+\frac{\left(kr\right)^{7}}{2520}+\mathcal{O}\left(\left(kr\right)^{8}\right)\right)\label{eq:expansion at zero 3D}\\
q_{5}\left(r,k\right) & = & \frac{k}{4\pi}\left(\right.-\frac{13}{8}+\frac{\left(kr\right)^{2}}{16}+\frac{\left(kr\right)^{4}}{320}-\frac{\left(kr\right)^{6}}{40320}+\nonumber \\
 &  & \frac{83\left(kr\right)^{8}}{2903040}-\frac{\left(kr\right)^{9}}{113400}+\mathcal{O}\left(\left(kr\right)^{10}\right)\left.\right)\nonumber \\
 & \dots & .\nonumber 
\end{eqnarray}
\end{lem}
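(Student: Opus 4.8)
The plan is to extract the regularity of $q_n$ from the Fourier-side decay of Lemma~\ref{lem:key lemma}, and to obtain the tabulated expansions by direct substitution. First I would identify $q_n$ as the real part of the inverse Fourier transform of $\widehat{g}_{n,oscill}$ from (\ref{eq:g hat oscill}). Indeed, $\widehat{g}_n$ in (\ref{eq: non-oscillatory}) has no real poles, so its inverse transform is the ordinary integral $g_n$, whereas by (\ref{eq:  reGpv}) the real part of $G$ is the principal-value transform of $1/(\|\mathbf p\|^2-k^2)$. Subtracting, $q_n=\mathcal{R}e(g_{n,oscill})$ is precisely the principal-value inverse transform of $\widehat{g}_{n,oscill}$, whose symbol decays like $\mathcal{O}(\|\mathbf p\|^{-2n-2})$ by (\ref{eq:main approx}).

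For the smoothness I would differentiate under the integral sign, so that a Cartesian derivative $\partial^\alpha$ inserts the factor $(i\mathbf p)^\alpha$. In dimension $d=3$ the radial integral carries the weight $p^2\,dp$, hence at infinity the integrand behaves like $p^{\,|\alpha|+2-(2n+2)}=p^{\,|\alpha|-2n}$, which is absolutely integrable exactly when $|\alpha|\le 2n-2$. The only pole, at $p=k$, is isolated and the remaining factors $(i\mathbf p)^\alpha e^{i\mathbf x\cdot\mathbf p}$ are smooth there, so the principal value persists under differentiation and does not affect regularity at $\mathbf r=0$. Thus every derivative integral of order $\le 2n-2$ converges uniformly in $\mathbf x$ and is continuous, giving $q_n\in C^{2n-2}$; for $|\alpha|=2n-1$ the tail diverges logarithmically, so the count is sharp.

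The real-variable content of this is instructive and worth recording. A radial function is $C^{2l}$ but generally not $C^{2l+1}$ at $\mathbf 0\in\mathbb{R}^3$ precisely when its expansion carries the odd power $r^{2l+1}=\|\mathbf x\|^{2l+1}$, while the even powers $r^{2l}=\|\mathbf x\|^{2l}$ are polynomials. Consequently $q_n\in C^{2n-2}$ forces the odd-power terms $r^{-1},r,r^3,\dots,r^{2n-3}$ of $q_n$ to cancel, leaving $r^{2n-1}$ as the first non-smooth contribution; equivalently, the odd-power Taylor coefficients of $g_n$ must coincide with those of $\frac{1}{4\pi}\cos(kr)/r$ through order $r^{2n-3}$, which is the spatial counterpart of $\widehat{G}$ and $\widehat{g}_n$ sharing the same large-$p$ expansion in even powers of $p$ through order $p^{-2n}$.

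To produce the list (\ref{eq:expansion at zero 3D}) I would substitute $\cos kr=\sum_{l\ge0}(-1)^l(kr)^{2l}/(2l)!$ and the series for $e^{-kr}/r$ times the polynomial of (\ref{eq:general formula for g_n(r)}) into (\ref{eq:difference q_n}), then collect powers of $r$: the $r^{-1}$ terms cancel, the residual constant supplies the prefactor $k/(4\pi)$, and the coefficients of $q_1,\dots,q_5$ follow, the first odd power appearing at $r^{2n-1}$ in each case. The main obstacle is conceptual rather than computational---correctly relating Fourier decay to \emph{Cartesian} (not merely radial) smoothness, so that the $d=3$ weight $p^2\,dp$ turns the order $2n+2$ decay into exactly $2n-2$ continuous derivatives---while the one genuine technical point is justifying differentiation under the principal-value integral, which is handled by the isolation of the pole at $p=k$.
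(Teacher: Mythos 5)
Your proposal is correct and follows essentially the same route as the paper: both identify $q_n$ as the principal-value inverse Fourier transform of $\widehat{g}_{n,oscill}$, use its $\mathcal{O}\left(\left\Vert \mathbf{p}\right\Vert ^{-2n-2}\right)$ decay to justify differentiating under the integral sign exactly $2n-2$ times, invoke evenness to explain why the first obstruction is the odd power $r^{2n-1}$, and obtain the tabulated expansions by direct substitution of series into (\ref{eq:difference q_n}). Your only (welcome) refinement is carrying out the derivative count directly in Cartesian variables with the weight $p^{2}\,dp$, which makes the passage from radial decay to \emph{partial} derivatives at $\mathbf{r}=0$ more explicit than the paper's radial differentiation followed by ``replacing $r$ by $\left\Vert \mathbf{r}\right\Vert $.''
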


\begin{proof}
The function $q_{n}$ in (\ref{eq:difference q_n}) is the inverse
Fourier transform of rotationally invariant function (\ref{eq:g hat oscill}),
\begin{equation}
\mathcal{R}e\left(g_{n,oscill}\right)\left(r,k\right)=\frac{1}{2\pi^{2}r}\mbox{p.v.}\int_{0}^{\infty}\frac{p}{p^{2}-k^{2}}\frac{\left(2k^{2}\right)^{n}}{\left(p^{2}+k^{2}\right)^{n}}\sin\left(pr\right)dp.\label{eq: g_n,oscill}
\end{equation}
Formally $\mathcal{R}e\left(g_{n,oscill}\right)$ in (\ref{eq: g_n,oscill})
is an even function of $r$ and, as long as the necessary derivatives
exist, only even powers can appear in its Taylor expansion. Taking
$2n-2$ derivatives of $g_{n,oscill}\left(r,k\right)$ with respect
to $r$ correspond to multiplying the integrand in (\ref{eq: g_n,oscill})
by powers $p^{j}$, $0<j\le2n-2$ which changes the rate of decay
of the integrand from $\mathcal{O}\left(p^{-2n-1}\right)$ to as low
as $\mathcal{O}\left(p^{3}\right)$ so that the integrals for these
derivatives exist. Replacing $r$ by $\left\Vert \mathbf{r}\right\Vert $,
we observe that continuous partial derivatives at zero exist up to
the order $2n-2$. As we see in (\ref{eq:expansion at zero 3D}),
the next term in the expansion does not yield a continuous derivative
of (\ref{eq:difference cos/r and g_n-1}) at zero. Several examples
of expansions of (\ref{eq:difference q_n}) using (\ref{eq:general formula for g_n(r)})
are presented in (\ref{eq:expansion at zero 3D}) .
\end{proof}
In Figure~\ref{fig:Real-and-imaginary} we plot $g_{4,oscill}$ to
illustrate the behavior of this oscillatory component.
\begin{figure}
\begin{centering}
\includegraphics[scale=0.4]{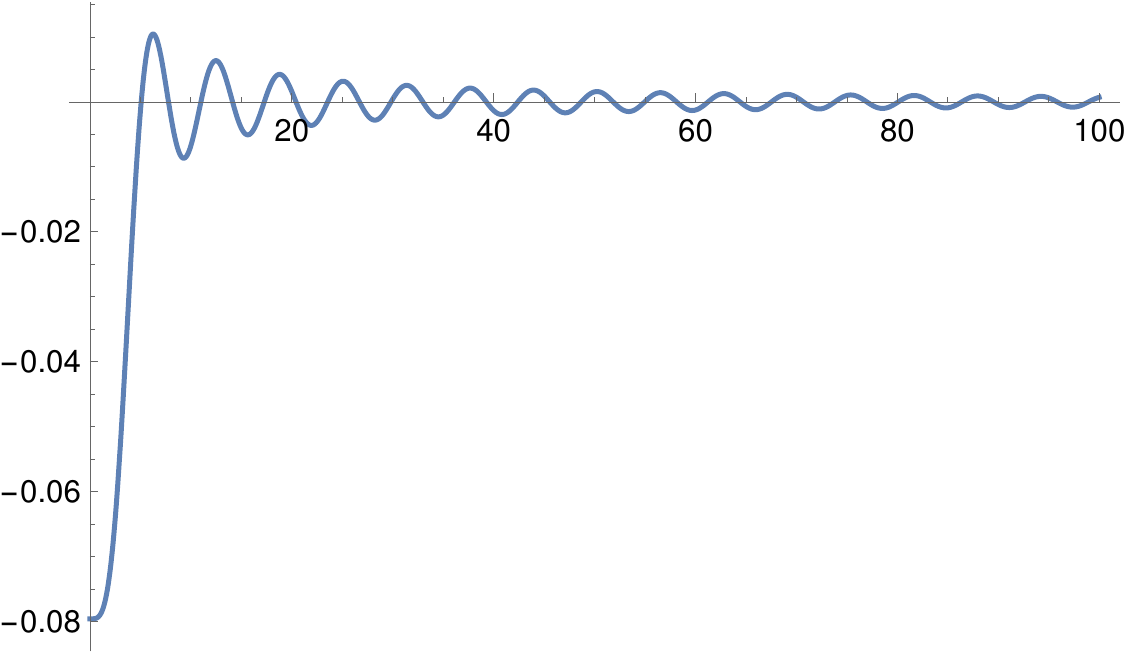}\includegraphics[scale=0.4]{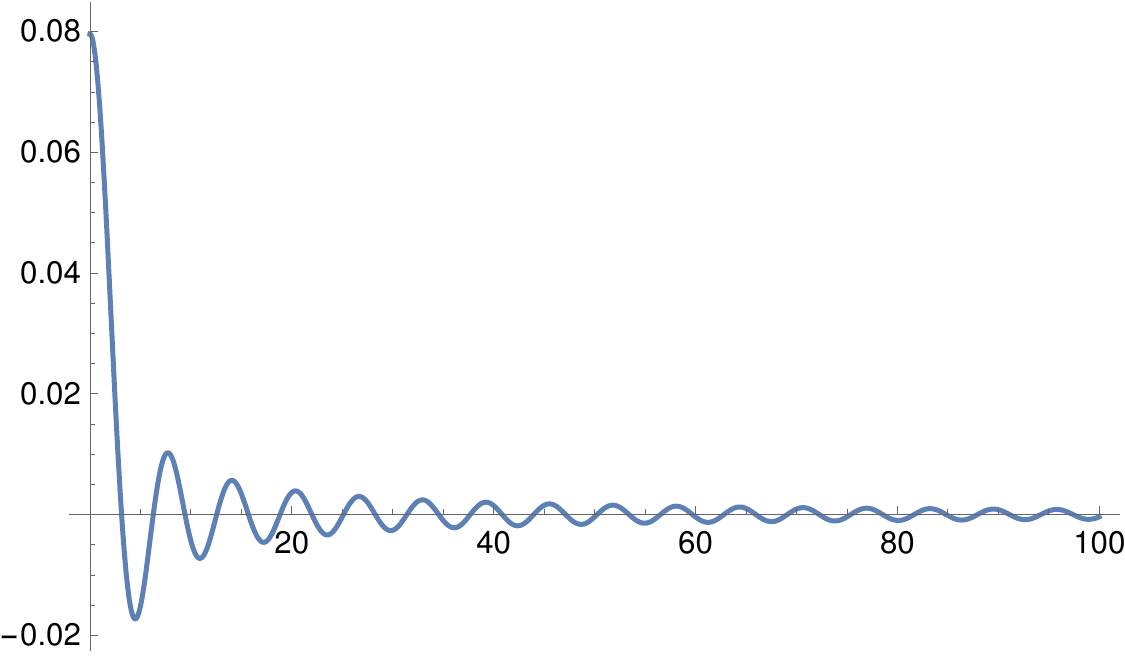}
\par\end{centering}
\begin{centering}
(a)~~~~~~~~~~~~~~~~~~~~~~~~~~~~~~~~~~~~~~~~~~~~~~~~~~~~~~~~~~~~~~~~~~~~~~~~~~~~~~~~~~~~~~~(b)
\par\end{centering}
\caption{\label{fig:Real-and-imaginary}The real part $\mathcal{R}e\left(g_{4,oscill}\right)\left(r,k\right)=\frac{1}{4\pi}\cos\left(kr\right)/r-g_{4}\left(r,k\right)$
(\ref{eq:difference q_n}) (a) and the imaginary part $\frac{1}{4\pi}\sin\left(kr\right)/r$
of the oscillatory component (b) for $k=1$.}
\end{figure}

\section{Spatial representations in $\mathbb{R}^{2}$}

To avoid confusion, we denote the inverse Fourier transform of (\ref{eq: non-oscillatory})
in dimension $d=2$ as $h_{n}\left(r\right)$. We have
\begin{lem}
The inverse Fourier transform of (\ref{eq: non-oscillatory}) in dimension
$d=2$ yields 
\begin{equation}
h_{n}\left(r,k\right)=\frac{1}{2\pi}\sum_{j=0}^{n-1}\frac{\left(kr\right)^{j}}{j!}K_{j}\left(kr\right),\label{eq:h_n in 2D}
\end{equation}
where $K$ is the modified Bessel function of the second kind. 
\end{lem}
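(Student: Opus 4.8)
The plan is to compute the inverse Fourier transform term by term, exploiting rotational invariance to reduce the two--dimensional transform to a one--dimensional Hankel integral, exactly as was done for the three--dimensional case in the preceding lemma.

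First I would perform the angular integration. For a rotationally invariant $\widehat{f}\left(\left\Vert \mathbf{p}\right\Vert \right)$ the identity $\int_{0}^{2\pi}e^{ipr\cos\theta}\,d\theta=2\pi J_{0}\left(pr\right)$ turns the full transform into a weighted $J_{0}$--transform, so that
\[
h_{n}\left(r,k\right)=\frac{1}{\left(2\pi\right)^{2}}\int_{\mathbb{R}^{2}}\widehat{g}_{n}\left(\left\Vert \mathbf{p}\right\Vert ,k\right)e^{i\mathbf{p}\cdot\mathbf{r}}\,d\mathbf{p}=\frac{1}{2\pi}\sum_{j=0}^{n-1}\left(2k^{2}\right)^{j}\int_{0}^{\infty}\frac{p\,J_{0}\left(pr\right)}{\left(p^{2}+k^{2}\right)^{j+1}}\,dp,
\]
where interchanging the finite sum with the integral is harmless since each integrand is bounded near $p=0$ and decays like $p^{-2j-1}$ as $p\to\infty$.

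Second, I would evaluate each one--dimensional integral using a standard tabulated Hankel transform (e.g.\ \cite[Formula 6.565.4]{GRA-RYZ:2007}),
\[
\int_{0}^{\infty}\frac{x^{\mu+1}J_{\mu}\left(ax\right)}{\left(x^{2}+k^{2}\right)^{\nu+1}}\,dx=\frac{a^{\nu}k^{\mu-\nu}}{2^{\nu}\Gamma\left(\nu+1\right)}K_{\mu-\nu}\left(ak\right),
\]
specialized to $\mu=0$, $\nu=j$, $a=r$. Using the reflection identity $K_{-j}=K_{j}$ and $\Gamma\left(j+1\right)=j!$, this gives
\[
\int_{0}^{\infty}\frac{p\,J_{0}\left(pr\right)}{\left(p^{2}+k^{2}\right)^{j+1}}\,dp=\frac{\left(r/k\right)^{j}}{2^{j}\,j!}\,K_{j}\left(kr\right).
\]

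Finally, substituting this back and simplifying the prefactor $\left(2k^{2}\right)^{j}\left(r/k\right)^{j}/2^{j}=\left(kr\right)^{j}$ collapses the constants and yields the claimed expression \eqref{eq:h_n in 2D}. The computation is otherwise routine; the only points requiring care are verifying the hypotheses of the tabulated transform (namely $r>0$, $\operatorname{Re}k>0$, and the order condition $-1<\mu<2\nu+3/2$, which holds for every $j\ge0$ when $\mu=0$) and recording the reflection identity $K_{-j}=K_{j}$ that brings the index into the stated form. I do not anticipate any genuine obstacle beyond correctly matching the parameters in the tabulated integral.
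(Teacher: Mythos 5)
Your proof is correct and follows essentially the same route as the paper: reduce the rotationally invariant two--dimensional inverse Fourier transform to a Hankel ($J_{0}$) transform term by term, then evaluate each integral with Gradshteyn--Ryzhik Formula 6.565.4 and simplify using $K_{-j}=K_{j}$. The paper's proof simply cites the tabulated formula without writing it out, so your version is the same argument with the parameter matching and validity conditions made explicit.
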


\begin{proof}
Using (\ref{eq: non-oscillatory}), we obtain
\begin{eqnarray}
h_{n}\left(r,k\right) & = & \frac{1}{\left(2\pi\right)^{2}}\int_{\mathbb{R}^{2}}\widehat{g}_{n}\left(\left\Vert \mathbf{p}\right\Vert \right)e^{i\mathbf{p}\cdot\mathbf{r}}d\mathbf{p}\nonumber \\
 & = & \frac{1}{2\pi}\sum_{j=0}^{n-1}\int_{0}^{\infty}\frac{\left(2k^{2}\right)^{j}}{\left(p^{2}+k^{2}\right)^{j+1}}J_{0}\left(pr\right)pdp\label{eq:integral 2d}
\end{eqnarray}
The integral in (\ref{eq:integral 2d}) is available in \cite[Formula 6.565.4]{GRA-RYZ:2007}
leading to (\ref{eq:h_n in 2D}).

Next we consider the difference between the real part of $G$ in (\ref{eq:  FreeSpaceGF})
and $h_{n}$ in dimension $d=2$,
\begin{equation}
v_{n}\left(r,k\right)=\mathcal{R}e\left(h_{n,oscill}\right)\left(r,k\right)=-\frac{1}{4}Y_{0}\left(kr\right)-h_{n}\left(r,k\right),\label{eq:v_n (r,k)}
\end{equation}
and examine its behavior at $r=0$.
\end{proof}
\begin{lem}
The difference
\begin{equation}
\mathcal{R}e\left(h_{n,oscill}\right)\left(\mathbf{r},k\right)=-\frac{1}{4}Y_{0}\left(k\left\Vert \mathbf{r}\right\Vert \right)-h_{n}\left(\left\Vert \mathbf{r}\right\Vert ,k\right)\label{eq:h_n oscill}
\end{equation}
has continuous partial derivatives at $\left\Vert \mathbf{r}\right\Vert =0$
up to order $2n$. The Taylor expansion of $v_{n}\left(r,k\right)$
yields

\begin{eqnarray}
v_{1}\left(r,k\right) & = & \frac{-1+\gamma-\log2}{4\pi}\left(kr\right)^{2}+\mathcal{O}\left(\left(kr\right)^{2}\log\left(kr\right)\right)\nonumber \\
v_{2}\left(r,k\right) & = & -\frac{1}{2\pi}-\frac{1}{8\pi}\left(kr\right)^{2}+\frac{5-4\gamma+4\log2}{128\pi}\left(kr\right)^{4}+\mathcal{O}\left(\left(kr\right)^{4}\log\left(kr\right)\right)\nonumber \\
v_{3}\left(r,k\right) & = & -\frac{1}{\pi}+\frac{1}{64\pi}\left(kr\right)^{4}+\frac{-4+3\gamma-3\log2}{1728\pi}\left(kr\right)^{6}+\mathcal{O}\left(\left(kr\right)^{6}\log\left(kr\right)\right)\nonumber \\
v_{4}\left(r,k\right) & = & -\frac{5}{3\pi}+\frac{1}{12\pi}\left(kr\right)^{2}+\frac{1}{192\pi}\left(kr\right)^{4}-\frac{5}{6912\pi}\left(kr\right)^{6}\label{eq:2d formulas}\\
 & + & \frac{11-8\gamma+8\log2}{147456\pi}\left(kr\right)^{8}+\mathcal{O}\left(\left(kr\right)^{8}\log\left(kr\right)\right)\nonumber \\
v_{5}\left(r,k\right) & = & -\frac{8}{3\pi}+\frac{1}{6\pi}\left(kr\right)^{2}-\frac{1}{3456\pi}\left(kr\right)^{6}+\frac{1}{55296\pi}\left(kr\right)^{8}\nonumber \\
 & + & \frac{-169+120\gamma-120\log2}{110592000\pi}\left(kr\right)^{10}+\mathcal{O}\left(\left(kr\right)^{10}\log\left(kr\right)\right)\nonumber \\
 & \dots & ,\nonumber 
\end{eqnarray}
where $\gamma$ is Euler's constant.
\end{lem}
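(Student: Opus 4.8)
The plan is to follow the pattern of the proof of Lemma~\ref{lem:The-Taylor-expansion}, but with two ingredients adapted to $d=2$: the Hankel-transform integral representation, used to count how many derivatives survive, and the explicit formula (\ref{eq:h_n in 2D}) together with the classical small-argument series of $Y_{0}$ and $K_{j}$, used to locate the first non-smooth term and to read off the coefficients in (\ref{eq:2d formulas}).

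For the smoothness count I would write $v_{n}$ as the inverse Fourier transform in $d=2$ of (\ref{eq:g hat oscill}),
\[
v_n\left(r,k\right)=\frac{1}{2\pi}\,\mbox{p.v.}\int_{0}^{\infty}\frac{p}{p^{2}-k^{2}}\frac{\left(2k^{2}\right)^{n}}{\left(p^{2}+k^{2}\right)^{n}}J_{0}\left(pr\right)dp .
\]
Differentiating $m$ times in $r$ replaces $J_{0}\left(pr\right)$ by $p^{m}J_{0}^{\left(m\right)}\left(pr\right)$. Since the rational factor decays like $\mathcal{O}\left(p^{-2n-1}\right)$ and $J_{0}^{\left(m\right)}\left(pr\right)=\mathcal{O}\left(p^{-1/2}\right)$ carries an oscillatory factor, the integrand behaves like $\mathcal{O}\left(p^{\,m-2n-3/2}\right)$, so the derivative integrals converge for $m\le 2n$; replacing $r$ by $\left\Vert \mathbf{r}\right\Vert $ then yields continuous partial derivatives at the origin. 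The extra half power, compared with the $d=3$ case, comes from the decay of $J_{0}$ and from the absence of the outer factor $r^{-1}$ present in (\ref{eq: g_n,oscill}).

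To pin down the first term that fails to be smooth, and to obtain (\ref{eq:2d formulas}), I would insert into (\ref{eq:v_n (r,k)}) the standard expansions $Y_{0}\left(z\right)=\frac{2}{\pi}\left(\log\frac{z}{2}+\gamma\right)J_{0}\left(z\right)+\left(\text{even power series}\right)$ and $K_{j}\left(z\right)=\left(-1\right)^{j+1}\left(\log\frac{z}{2}+\gamma\right)I_{j}\left(z\right)+\left(\text{Laurent/even part}\right)$, and collect all logarithmic contributions. The entire non-logarithmic part is an even power series in $z=kr$, hence a real-analytic radial function; the only obstruction to smoothness is therefore the coefficient multiplying $\log\frac{kr}{2}$, namely
\[
\frac{1}{2\pi}\log\frac{kr}{2}\left[\sum_{j=0}^{n-1}\frac{\left(-1\right)^{j}}{j!}\left(kr\right)^{j}I_{j}\left(kr\right)-J_{0}\left(kr\right)\right].
\]

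The crux is to show that the bracket is $\mathcal{O}\left(\left(kr\right)^{2n}\right)$. Writing $w=kr/2$ and using $I_{j}\left(z\right)=\sum_{l}w^{2l+j}/\left(l!\left(l+j\right)!\right)$, the coefficient of $w^{2m}$ in the sum equals $\frac{1}{\left(m!\right)^{2}}\sum_{j=0}^{m}\binom{m}{j}\left(-2\right)^{j}=\frac{\left(-1\right)^{m}}{\left(m!\right)^{2}}$ for $0\le m\le n-1$, which matches $J_{0}$ exactly; the first discrepancy occurs at $w^{2n}$, where truncating the sum at $j=n-1$ omits the $\left(-2\right)^{n}$ term and leaves a nonzero coefficient proportional to $\left(-2\right)^{n}/\left(n!\right)^{2}$. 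Hence the leading surviving logarithmic term is of order $\left(kr\right)^{2n}\log\left(kr\right)$, which accounts for the $\mathcal{O}\!\left(\left(kr\right)^{2n}\log\left(kr\right)\right)$ remainders in (\ref{eq:2d formulas}) and is exactly the term that limits the number of continuous partial derivatives at $\left\Vert \mathbf{r}\right\Vert =0$. The main obstacle is this binomial cancellation identity; once it is in hand, the listed Taylor coefficients follow by expanding the analytic (non-logarithmic) part of $-\frac14 Y_{0}-h_{n}$ to the required order, a routine if lengthy computation.
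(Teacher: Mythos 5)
Your handling of the Taylor expansions is correct and in fact goes further than the paper's own proof: the paper obtains (\ref{eq:2d formulas}) by direct computation from (\ref{eq:h_n in 2D}) and the classical series for $Y_{0}$ and $K_{j}$, case by case, whereas you isolate the logarithmic part of $v_{n}$ and prove the cancellation $\sum_{j=0}^{m}\binom{m}{j}(-2)^{j}=(-1)^{m}$ for $m\le n-1$, first failing at $m=n$ with defect $-(-2)^{n}/(n!)^{2}$. That identity is right, and it shows for \emph{every} $n$ that $v_{n}$ equals an entire even power series in $kr$ plus $\log\left(kr\right)$ times an entire even series whose first nonzero term is a multiple of $\left(kr\right)^{2n}$. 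This explains the $\mathcal{O}\left(\left(kr\right)^{2n}\log\left(kr\right)\right)$ remainders in (\ref{eq:2d formulas}) uniformly in $n$, something the paper only exhibits through examples.

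The derivative-counting step, however, has a genuine gap. The decay $J_{0}^{(m)}\left(pr\right)=\mathcal{O}\left(p^{-1/2}\right)$ is decay in the product $pr$: it is not uniform as $r\to0$ and is absent at $r=0$, where $J_{0}^{(m)}\left(0\right)$ is a constant. Hence for $m=2n$ your bound $\mathcal{O}\left(p^{m-2n-3/2}\right)$ is unavailable at the origin, where the differentiated integral behaves like $\int^{\infty}p^{-1}dp$ and diverges logarithmically; convergence for $r>0$ does not give continuity of the order-$2n$ partials at $\mathbf{r}=0$. In fact that continuity fails: by your own decomposition the leading logarithmic term of $v_{n}$ is a nonzero multiple of $\left(kr\right)^{2n}\log\left(kr\right)$, and the order-$2n$ Cartesian partials of $\left\Vert \mathbf{r}\right\Vert ^{2n}\log\left\Vert \mathbf{r}\right\Vert $ blow up like $\log\left\Vert \mathbf{r}\right\Vert $. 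So the sharp conclusion your (correct) series argument supports is continuity of all partials of order $\le2n-1$ — each term $r^{2m}\log r$ with $m\ge n$ is $C^{2n-1}$ and the non-logarithmic part is an even analytic function, hence $C^{\infty}$ — which is one less than the order $2n$ asserted in the lemma, and no refinement of the Fourier-side estimate can do better because the obstruction is real. The paper's own proof, which merely invokes ``the same argument'' as Lemma~\ref{lem:The-Taylor-expansion}, is open to the same criticism: plain power counting with $J_{0}$ treated as bounded gives estimates uniform in $r$ exactly for $m\le2n-1$, consistent with the sharp count and with the $\left(kr\right)^{2n}\log\left(kr\right)$ terms that both you and the paper display. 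You should state this discrepancy rather than paper over it: your two arguments contradict each other, and it is the series argument that is trustworthy.
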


\begin{proof}
The function $v_{n}$ in (\ref{eq:v_n (r,k)}) is the inverse Fourier
transform of rotationally invariant function (\ref{eq:g hat oscill})
in dimension $d=2$, 
\[
\mathcal{R}e\left(h_{n,oscill}\right)\left(r,k\right)=\frac{1}{2\pi}\int_{0}^{\infty}\frac{p}{p^{2}-k^{2}}\frac{\left(2k^{2}\right)^{n}}{\left(p^{2}+k^{2}\right)^{n}}J_{0}\left(pr\right)pdp.
\]
We use the same argument to count the number of continuous derivatives
of $h_{n,oscill}\left(\mathbf{r},k\right)$ as in Lemma~\ref{lem:The-Taylor-expansion}.
Several examples of expansions of (\ref{eq:v_n (r,k)}) using (\ref{eq:h_n in 2D})
are presented in (\ref{eq:2d formulas}) .
\end{proof}

\section{Integral representations}

Integral representations via Gaussians of kernels of non-oscillatory
operators have been used as a starting point to obtain their accurate
multiresolution approximations via a linear combination of Gaussians,
see e.g. \cite{H-F-Y-G-B:2004,BEY-MOH:2005,B-C-F-H:2007,BE-MO-PE:2008,BEY-MON:2010,B-F-H-K-M:2012,H-B-B-C-F-F-G-etc:2016,A-H-S-T-B:2020}.
For any $\epsilon>0$, kernels are approximated with accuracy $\epsilon$
by a linear combination of Gaussians where the number of terms is
shown to be $\mathcal{O}\left(\left(\log\epsilon^{-1}\right)^{2}+\log\delta^{-1}\right)$,
where $\delta$ defines the interval of validity of the approximation,
e.g. $\delta<r<\delta^{-1}$ (see e.g. \cite{BEY-MON:2010}). This
estimate is somewhat conservative since the actual number of terms
appears to be $\mathcal{O}\left(\log\epsilon^{-1}+\log\delta^{-1}\right)$.
In what follows we construct integral representations via Gaussians
of the non-oscillatory components of the Green's function (\ref{eq:  FreeSpaceGF}).
\begin{lem}
The function $g_{n}\left(r,k\right)$ in dimension $d=3$ has an integral
representation
\begin{equation}
g_{n}\left(r,k\right)=\int_{-\infty}^{\infty}e^{-r^{2}e^{t}/4}w_{n}\left(k,t\right)dt\label{eq: integral repr g_n in 3D}
\end{equation}
where
\[
w_{n}\left(k,t\right)=\frac{1}{8\pi^{3/2}}e^{-k^{2}e^{-t}+\frac{1}{2}t}\left(\sum_{j=0}^{n-1}\frac{\left(2k^{2}e^{-t}\right)^{j}}{j!}\right).
\]
\end{lem}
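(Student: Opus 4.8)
The plan is to obtain the representation by passing to the Fourier domain, replacing each rational factor of $\widehat{g}_{n}$ by a Gaussian integral, and then inverting term by term. The starting point is the Fourier-domain expression (\ref{eq: non-oscillatory}) rather than the closed form (\ref{eq:general formula for g_n(r)}), since the Gaussian structure is transparent there.

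First I would record the elementary Gamma-function identity
\[
\frac{1}{\left(\left\Vert \mathbf{p}\right\Vert ^{2}+k^{2}\right)^{j+1}}=\frac{1}{j!}\int_{0}^{\infty}s^{j}e^{-s\left(\left\Vert \mathbf{p}\right\Vert ^{2}+k^{2}\right)}ds,
\]
which holds because $\left\Vert \mathbf{p}\right\Vert ^{2}+k^{2}>0$. Substituting this into (\ref{eq: non-oscillatory}) and interchanging the finite sum with the integral gives
\[
\widehat{g}_{n}\left(\left\Vert \mathbf{p}\right\Vert ,k\right)=\int_{0}^{\infty}e^{-s\left(\left\Vert \mathbf{p}\right\Vert ^{2}+k^{2}\right)}\left(\sum_{j=0}^{n-1}\frac{\left(2k^{2}s\right)^{j}}{j!}\right)ds,
\]
exhibiting $\widehat{g}_{n}$ as a superposition of Gaussians in $\mathbf{p}$ with an explicit $s$-dependent weight.

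Next I would apply the inverse three-dimensional Fourier transform (\ref{eq:  Inverse FT}) and use the standard Gaussian identity
\[
\frac{1}{\left(2\pi\right)^{3}}\int_{\mathbb{R}^{3}}e^{-s\left\Vert \mathbf{p}\right\Vert ^{2}}e^{i\mathbf{p}\cdot\mathbf{r}}d\mathbf{p}=\frac{1}{\left(4\pi s\right)^{3/2}}e^{-r^{2}/\left(4s\right)}.
\]
Interchanging the $\mathbf{p}$- and $s$-integrations then yields
\[
g_{n}\left(r,k\right)=\int_{0}^{\infty}\frac{1}{\left(4\pi s\right)^{3/2}}e^{-r^{2}/\left(4s\right)}e^{-sk^{2}}\left(\sum_{j=0}^{n-1}\frac{\left(2k^{2}s\right)^{j}}{j!}\right)ds.
\]
The substitution $s=e^{-t}$, under which $ds=-e^{-t}dt$ and the limits $s:0\to\infty$ become $t:+\infty\to-\infty$, converts this into the stated logarithmic-scale integral: using $s^{-3/2}ds=e^{3t/2}e^{-t}dt=e^{t/2}dt$ and $\left(4\pi\right)^{-3/2}=\left(8\pi^{3/2}\right)^{-1}$ reproduces exactly the weight $w_{n}\left(k,t\right)$.

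The one point requiring care is the interchange of the order of integration. Since for $r>0$ and $k>0$ every factor in the integrand (Gaussians and the polynomial in $s$) is nonnegative, Tonelli's theorem applies directly and the interchange is justified once the resulting double integral is seen to be finite, which follows from the rapid Gaussian decay in $\mathbf{p}$ for each fixed $s>0$ together with the convergence of the $s$-integral near $s=0$ (guaranteed by the factor $e^{-r^{2}/(4s)}$) and as $s\to\infty$ (guaranteed by $e^{-sk^{2}}$). Everything else is bookkeeping with constants and the change of variables, so this is the only genuine obstacle, and it is mild.
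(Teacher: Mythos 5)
Your derivation follows essentially the same route as the paper's proof: the paper starts from the identity $\rho^{-j-1}=\frac{1}{j!}\int_{-\infty}^{\infty}e^{-\rho e^{t}+(j+1)t}\,dt$, which is exactly your Gamma-function identity written in exponential coordinates, then interchanges the $\mathbf{p}$- and parameter-integrations and evaluates the same Gaussian inverse Fourier transform; your substitution $s=e^{-t}$ at the end is the paper's substitution made at the start. The bookkeeping with constants is correct and the formula for $w_{n}$ checks out.

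However, the step you single out as the ``only genuine obstacle'' is not resolved by the argument you give. Tonelli's theorem does not apply: the integrand of the double integral contains the factor $e^{i\mathbf{p}\cdot\mathbf{r}}$, so it is not nonnegative, and what you actually need is Fubini. But Fubini's hypothesis fails, because the double integral of the absolute value diverges: already for the $j=0$ term,
\[
\int_{0}^{\infty}\left(\int_{\mathbb{R}^{3}}e^{-s\left(\left\Vert \mathbf{p}\right\Vert ^{2}+k^{2}\right)}d\mathbf{p}\right)ds=\pi^{3/2}\int_{0}^{\infty}s^{-3/2}e^{-sk^{2}}ds=\infty,
\]
the divergence coming from $s\to0^{+}$; the saving factor $e^{-r^{2}/(4s)}$ you invoke only appears \emph{after} the $\mathbf{p}$-integration has been performed, so it cannot be used to justify performing it. Equivalently, $\widehat{g}_{n}\left(\left\Vert \mathbf{p}\right\Vert ,k\right)\sim\left\Vert \mathbf{p}\right\Vert ^{-2}$ is not in $L^{1}\left(\mathbb{R}^{3}\right)$, which is precisely why the inverse Fourier transform must be understood as the conditionally convergent radial integral (\ref{eq: explicit g_n(r) in 3D}). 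The repair is standard: Fubini does apply to every term with $j\ge1$ (the inner $\mathbf{p}$-integral then contributes $s^{j-3/2}$, integrable at $s=0$), and the $j=0$ (Yukawa) term can be handled separately, e.g.\ by inserting a regularizing factor $e^{-\epsilon\left\Vert \mathbf{p}\right\Vert ^{2}}$ and letting $\epsilon\to0^{+}$, or by checking directly that both sides equal $e^{-kr}/\left(4\pi r\right)$. In fairness, the paper performs the same interchange with no justification at all, so your argument is not less rigorous than the paper's; but the specific claim that Tonelli applies ``directly'' is false as stated.
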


\begin{proof}
Using the integral (see e.g. \cite[eq. 2]{BEY-MON:2010})
\[
\rho^{-j-1}=\frac{1}{j!}\int_{-\infty}^{\infty}e^{-\rho e^{t}+\left(j+1\right)t}dt,
\]
we obtain from (\ref{eq: explicit g_n(r) in 3D}) and (\ref{eq: non-oscillatory})
\begin{eqnarray}
g_{n}\left(r,k\right) & = & \frac{1}{\left(2\pi\right)^{3}}\sum_{j=0}^{n-1}\int_{\mathbb{R}^{3}}\frac{\left(2k^{2}\right)^{j}}{\left(\left\Vert \mathbf{p}\right\Vert ^{2}+k^{2}\right)^{j+1}}e^{i\mathbf{p}\cdot\mathbf{r}}d\mathbf{p}\nonumber \\
 &  & \frac{1}{\left(2\pi\right)^{3}}\int_{\mathbb{R}^{3}}\left(\sum_{j=0}^{n-1}\frac{\left(2k^{2}\right)^{j}}{j!}\int_{-\infty}^{\infty}e^{-\left(k^{2}+\left\Vert \mathbf{p}\right\Vert ^{2}\right)e^{t}+\left(j+1\right)t}dt\right)e^{i\mathbf{p}\cdot\mathbf{r}}d\mathbf{p}\nonumber \\
 & = & \frac{1}{\left(2\pi\right)^{3}}\int_{\mathbb{R}^{3}}\left(\int_{-\infty}^{\infty}e^{-e^{t}\left\Vert \mathbf{p}\right\Vert ^{2}}\left(\sum_{j=0}^{n-1}\frac{\left(2k^{2}e^{t}\right)^{j}}{j!}\right)e^{-k^{2}e^{t}+t}dt\right)e^{i\mathbf{p}\cdot\mathbf{r}}d\mathbf{p}\nonumber \\
 & = & \int_{-\infty}^{\infty}\left(\frac{1}{\left(2\pi\right)^{3}}\int_{\mathbb{R}^{3}}e^{-e^{-t}\left\Vert \mathbf{p}\right\Vert ^{2}}e^{i\mathbf{p}\cdot\mathbf{r}}d\mathbf{p}\right)\sum_{j=0}^{n-1}\frac{\left(2k^{2}e^{-t}\right)^{j}}{j!}e^{-k^{2}e^{-t}-t}dt,\label{eq:g_n in 3D}
\end{eqnarray}
where in the last integral we changed the order of integration and
replaced $t$ by $-t$ for convenience. Computing the inverse Fourier
transform of rotationally invariant function $e^{-e^{-t}\left\Vert \mathbf{p}\right\Vert ^{2}}$
in dimension $d=3$, we have 
\begin{eqnarray}
\frac{1}{\left(2\pi\right)^{3}}\int_{\mathbb{R}^{3}}e^{-e^{-t}\left\Vert \mathbf{p}\right\Vert ^{2}}e^{i\mathbf{r}\cdot\mathbf{p}}d\mathbf{p} & = & \frac{1}{2\pi^{2}r}\int_{0}^{\infty}e^{-e^{-t}p^{2}}\sin\left(pr\right)pdp\label{eq:intermediate}\\
 & = & \frac{1}{8\pi^{3/2}}e^{-r^{2}e^{t}/4}e^{\frac{3}{2}t}.\nonumber 
\end{eqnarray}
Substituting (\ref{eq:intermediate}) into (\ref{eq:g_n in 3D}),
we arrive at (\ref{eq: integral repr g_n in 3D}).
\end{proof}
Turning to dimension $d=2$, we have
\begin{lem}
The function $h_{n}$ in dimension $d=2$ has an integral representation
\begin{equation}
h_{n}\left(r,k\right)=\int_{-\infty}^{\infty}e^{-r^{2}e^{t}/4}\omega_{n}\left(k,t\right)dt,\label{eq:integral representation g_n in 2D}
\end{equation}
where
\[
\omega_{n}\left(k,t\right)=\frac{1}{4\pi}e^{-k^{2}e^{-t}}\left(\sum_{j=0}^{n-1}\frac{\left(2k^{2}e^{-t}\right)^{j}}{j!}\right).
\]
\end{lem}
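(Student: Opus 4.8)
The plan is to mirror, step for step, the proof just given for the three-dimensional representation of $g_{n}$, changing only the spatial dimension from $d=3$ to $d=2$. I start from the two-dimensional inverse Fourier transform of the non-oscillatory component, already recorded in the proof of the lemma giving \eqref{eq:h_n in 2D}, namely
\[
h_{n}\left(r,k\right)=\frac{1}{\left(2\pi\right)^{2}}\int_{\mathbb{R}^{2}}\sum_{j=0}^{n-1}\frac{\left(2k^{2}\right)^{j}}{\left(\left\Vert \mathbf{p}\right\Vert ^{2}+k^{2}\right)^{j+1}}e^{i\mathbf{p}\cdot\mathbf{r}}d\mathbf{p}.
\]
To each summand I apply the scalar identity $\rho^{-j-1}=\frac{1}{j!}\int_{-\infty}^{\infty}e^{-\rho e^{t}+\left(j+1\right)t}dt$ with $\rho=\left\Vert \mathbf{p}\right\Vert ^{2}+k^{2}$, exactly as in the $d=3$ case. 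Collecting the powers $\left(2k^{2}e^{t}\right)^{j}/j!$ and factoring out $e^{-k^{2}e^{t}+t}$, this rewrites the integrand in the Gaussian form $e^{-e^{t}\left\Vert \mathbf{p}\right\Vert ^{2}}$ times a $t$-dependent weight.

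Next I interchange the order of the $\mathbf{p}$- and $t$-integrations and substitute $t\mapsto-t$, so that the inner integral becomes the two-dimensional inverse Fourier transform of the rotationally invariant Gaussian $e^{-e^{-t}\left\Vert \mathbf{p}\right\Vert ^{2}}$. This is the only genuinely dimension-specific computation: in dimension $d=2$ the radial transform reduces to the Hankel/Bessel integral
\[
\frac{1}{\left(2\pi\right)^{2}}\int_{\mathbb{R}^{2}}e^{-e^{-t}\left\Vert \mathbf{p}\right\Vert ^{2}}e^{i\mathbf{r}\cdot\mathbf{p}}d\mathbf{p}=\frac{1}{2\pi}\int_{0}^{\infty}e^{-e^{-t}p^{2}}J_{0}\left(pr\right)p\,dp=\frac{e^{t}}{4\pi}e^{-r^{2}e^{t}/4},
\]
where I use $\int_{0}^{\infty}e^{-ap^{2}}J_{0}\left(pr\right)p\,dp=\frac{1}{2a}e^{-r^{2}/\left(4a\right)}$ with $a=e^{-t}$. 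The prefactor here is $e^{t}$, in contrast to the $e^{3t/2}$ appearing in \eqref{eq:intermediate}; this reflects the general normalization $\left(4\pi a\right)^{-d/2}$ of the Gaussian Fourier transform, which contributes $a^{-d/2}=e^{dt/2}$ and hence $e^{t}$ when $d=2$.

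Substituting this back and combining the exponential factors, the prefactor $e^{t}$ from the Fourier transform cancels against the $e^{-t}$ carried by the integral identity, leaving precisely
\[
h_{n}\left(r,k\right)=\int_{-\infty}^{\infty}e^{-r^{2}e^{t}/4}\,\frac{1}{4\pi}e^{-k^{2}e^{-t}}\left(\sum_{j=0}^{n-1}\frac{\left(2k^{2}e^{-t}\right)^{j}}{j!}\right)dt,
\]
which is the claimed formula with the stated $\omega_{n}\left(k,t\right)$. I expect the only delicate point to be the justification of the interchange of integration orders; this is harmless because each summand is a finite-$n$ sum of absolutely convergent integrals, the inner Gaussian integral converges for every real $t$, and the weight $e^{-k^{2}e^{-t}+t}$ decays doubly-exponentially as $t\to+\infty$ (through $t\mapsto-t$) and exponentially as $t\to-\infty$, so Fubini's theorem applies termwise. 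As a consistency check one may confirm that the representation reproduces the closed form \eqref{eq:h_n in 2D}, since the integral over $t$ of each Gaussian weight against $e^{-r^{2}e^{t}/4}$ yields exactly the terms $\frac{1}{2\pi}\frac{\left(kr\right)^{j}}{j!}K_{j}\left(kr\right)$ via the standard integral representation of the modified Bessel function $K_{j}$.
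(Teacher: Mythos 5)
Your derivation is correct, but it takes a genuinely different route from the paper's own proof of this lemma. The paper proves it in essentially one step from the closed form (\ref{eq:h_n in 2D}): it quotes the integral representation $\left(kr\right)^{j}K_{j}\left(kr\right)=2^{j-1}k^{2j}\int_{0}^{\infty}e^{-r^{2}s/4-k^{2}/s}s^{-j-1}ds$ from \cite{BEY-MON:2010}, sums over $j$, and changes variables $s=e^{t}$. You instead bypass the Bessel-$K$ closed form entirely and rerun the three-dimensional argument in the Fourier domain: apply $\rho^{-j-1}=\frac{1}{j!}\int_{-\infty}^{\infty}e^{-\rho e^{t}+\left(j+1\right)t}dt$ with $\rho=\left\Vert \mathbf{p}\right\Vert ^{2}+k^{2}$, interchange the $\mathbf{p}$- and $t$-integrations, and evaluate the two-dimensional Gaussian transform, correctly obtaining the prefactor $e^{t}/\left(4\pi\right)$ (versus $e^{3t/2}/\left(8\pi^{3/2}\right)$ in (\ref{eq:intermediate})), whose $e^{t}$ cancels the $e^{-t}$ from the identity so that $\omega_{n}$ carries no residual power of $e^{t}$, exactly as stated. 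Your route buys self-containedness --- it needs only the Weber integral $\int_{0}^{\infty}e^{-ap^{2}}J_{0}\left(pr\right)p\,dp=\frac{1}{2a}e^{-r^{2}/\left(4a\right)}$ rather than the cited representation of $K_{j}$, and it makes the dimensional bookkeeping transparent through the $\left(4\pi a\right)^{-d/2}$ normalization. The paper's route is shorter because the work is already done: (\ref{eq:h_n in 2D}) is established and the Gaussian representation of $K_{j}$ is available by citation.

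One caveat on your closing justification: the claim that Fubini applies termwise because each summand is an absolutely convergent integral fails for the $j=0$ term, since $\int_{\mathbb{R}^{2}}\left(\left\Vert \mathbf{p}\right\Vert ^{2}+k^{2}\right)^{-1}d\mathbf{p}$ diverges logarithmically; that Fourier integral converges only because of the oscillation of $e^{i\mathbf{p}\cdot\mathbf{r}}$, so joint absolute integrability over $\left(\mathbf{p},t\right)$ does not hold and Fubini cannot be invoked directly. The repair is simple: perform the angular integration first, reducing the $j=0$ term to $\frac{1}{2\pi}\int_{0}^{\infty}\frac{p\,J_{0}\left(pr\right)}{p^{2}+k^{2}}dp$, which is absolutely convergent since $J_{0}\left(pr\right)=\mathcal{O}\left(p^{-1/2}\right)$ for large $p$; then the interchange of the radial $p$-integral with the $t$-integral is legitimate for every $j\ge0$ (and for $r>0$). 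The paper's own three-dimensional proof, which you mirrored, performs the same interchange without comment, so your argument is no less rigorous than its template; but since you explicitly flagged this as the delicate point, the justification you gave is the one place where the proof as written is not airtight.
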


\begin{proof}
Using integral representation of functions $K_{j}$ derived in \cite[eq. 36]{BEY-MON:2010},
we have

\[
\left(kr\right)^{j}K_{j}\left(kr\right)=2^{j-1}k^{2j}\int_{0}^{\infty}e^{-r^{2}s/4-k^{2}/s}s^{-j-1}ds,
\]
and, therefore,
\[
\frac{1}{2\pi}\sum_{j=0}^{n-1}\frac{\left(kr\right)^{j}}{j!}K_{j}\left(kr\right)=\frac{1}{4\pi}\int_{0}^{\infty}e^{-r^{2}s/4-k^{2}/s}\sum_{j=0}^{n-1}\frac{\left(2k^{2}\right)^{j}}{j!}s^{-j-1}ds.
\]
Changing variables $s=e^{t}$, we obtain (\ref{eq:integral representation g_n in 2D}). 
\end{proof}

\section{Discretization of integral representations}

Spatial integral representations of non-oscillatory components in
(\ref{eq: integral repr g_n in 3D}) and (\ref{eq:integral representation g_n in 2D})
lead to approximations of functions $g_{n}$ and $h_{n}$ by a linear
combination of Gaussians. In both cases, for $r>\delta$, the integrands
decay super exponentially for $t\to\pm\infty$, where the rate of
decay for $t\to-\infty$ is controlled by $k$. Heuristically, by
selecting a finite interval of integration so that the integrands
and their derivatives are negligible outside that interval, any user
selected accuracy $\epsilon$ can be achieved using the trapezoidal
rule. The resulting sum is a linear combination of Gaussians which
may be viewed as a multiresolution approximation of the non-oscillatory
component. 

As a result, we obtain a separated multiresolution approximation of
the kernel of the non-oscillatory component of the Helmholtz operator.
There are several approaches to apply this kernel rapidly in $\mathcal{O}\left(k^{d}\left(\log\epsilon^{-1}\right)^{2}\right)$
operations, for example via algorithms in \cite{BE-CH-PE:2008} or
via the Fast Gauss transform in \cite{GRE-STR:1991,GRE-SUN:1998}.
We refer to \cite{BE-KU-MO:2009,BE-KU-MO:2008} for the details of
algorithms for applying both the oscillatory and the non-oscillatory
(singular) components.

As an example, we describe an approximation of $g_{4}\left(r,k\right)$
where $k=100$. We discretize the integral in (\ref{eq: integral repr g_n in 3D})
as 
\begin{equation}
\widetilde{g}_{4}\left(r,k\right)=\Delta\sum_{m=20}^{200}e^{-r^{2}e^{m\Delta}/4}w_{4}\left(k,m\Delta\right),\,\,\,\Delta=\frac{1}{4},\,\,\,k=100,\,\,\,r=\left\Vert \mathbf{r}\right\Vert .\label{eq:lin_comb_gauss}
\end{equation}
 Near the singularity of $g_{4}\left(r,k\right)$, on the interval
$10^{-10}\le r\le10^{-3}$, we use the relative error
\begin{equation}
e_{0}\left(r\right)=\log_{10}\left|\frac{g_{4}\left(r,k\right)-\widetilde{g}_{4}\left(r,k\right)}{g_{4}\left(r,k\right)}\right|,\label{eq:rel_err}
\end{equation}
and on the interval $10^{-3}\le r\le40/k+2\log_{10}k/k$ (or $10^{-3}\le r\le44/100$,
see Remark~\ref{rem:We-want-to}) the absolute error,
\begin{equation}
e_{1}\left(r\right)=\log_{10}\left|g_{4}\left(r,k\right)-\widetilde{g}_{4}\left(r,k\right)\right|.\label{eq:abs_err}
\end{equation}
These approximation errors are illustrated in Figure~\ref{fig:errors}.
The terms of the linear combination of Gaussians in (\ref{eq:lin_comb_gauss})
can be applied to a function in parallel. Also we note that the Gaussians
with large exponents can be treated as approximations to a delta function
and such terms can be combined reducing the overall number of terms.

\begin{figure}
\begin{centering}
\includegraphics[scale=0.3]{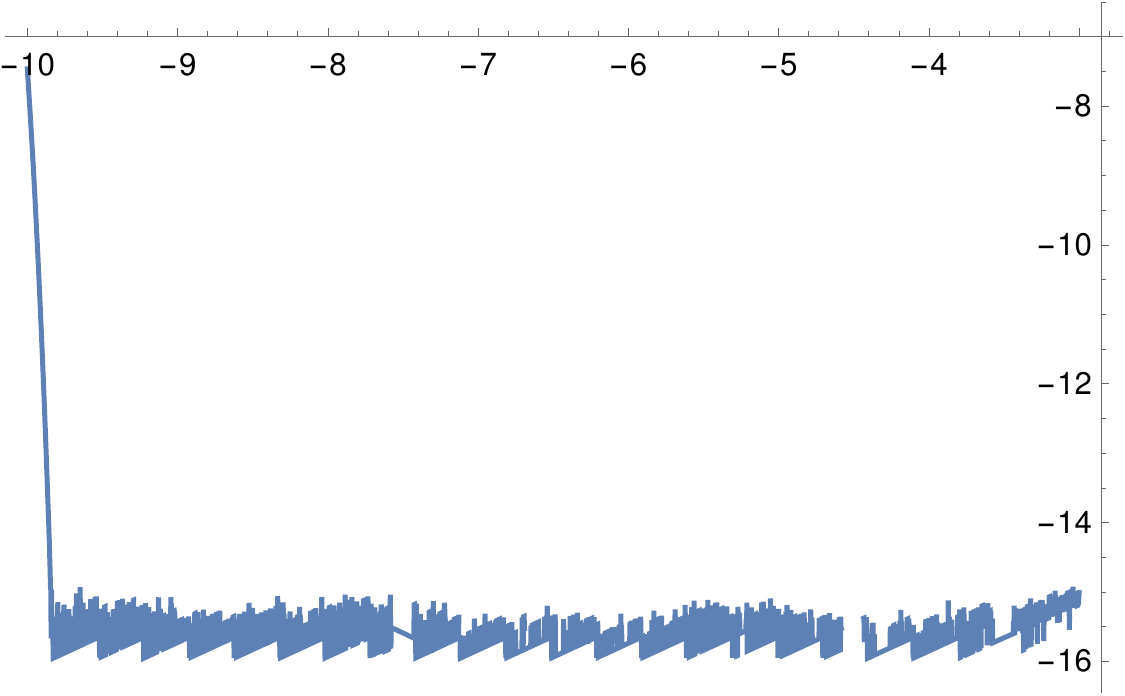}\includegraphics[scale=0.3]{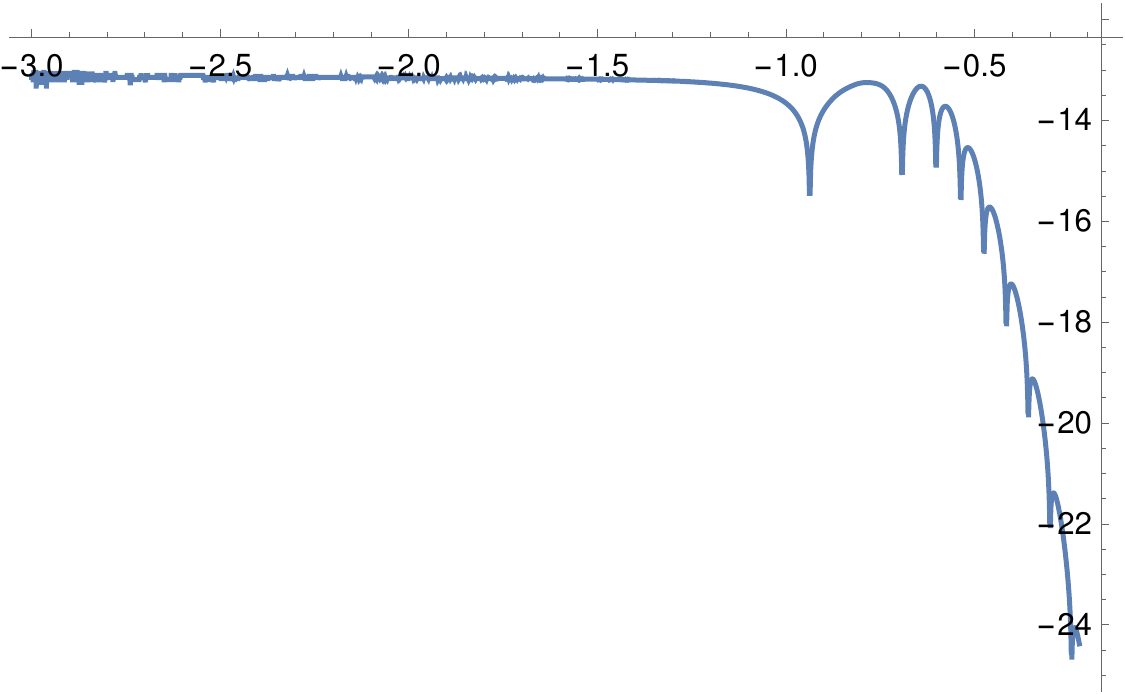}
\par\end{centering}
\caption{\label{fig:errors}The relative error $e_{0}\left(10^{x}\right)$
in (\ref{eq:rel_err}) for $-10\le x\le3$ and the absolute error
$e_{1}\left(10^{x}\right)$ in (\ref{eq:abs_err}) for $-3\le x\le\log_{10}\left(0.44\right)$. }

\end{figure}

\section{Conclusions}

The Helmholtz operator appears in many problems of mathematical physics
and is also part of more complicated Green's functions. In particular,
the components of the Dyadic Green's function in electromagnetics
are derivatives of the the Helmholtz Green's function. Therefore,
the splitting into the oscillatory and the non-oscillatory components
can be obtained for the Dyadic Green's function as well. We expect
several problems beyond the one described in this paper to be addressed
using our results.

\section{Acknowledgments}

The author would like to thank Brad Alpert (NIST) and Lucas Monz\'{o}n
(CU) for suggestions to improve the manuscript.

\bibliographystyle{plain}

\end{document}